\documentclass[aps,pra,twocolumn,superscriptaddress,nofootinbib,longbibliography]{revtex4-2}

\usepackage{amsmath,amssymb,amsthm}
\usepackage{graphicx}
\usepackage[hidelinks]{hyperref}

\newtheorem{theorem}{Theorem}
\newtheorem{corollary}{Corollary}
\newtheorem{definition}{Definition}

\newcommand{\F}{\mathbb{F}}
\newcommand{\Z}{\mathbb{Z}}
\newcommand{\Sp}{\mathrm{Sp}}
\newcommand{\GL}{\mathrm{GL}}

\begin{document}

\title{Computational Complexity of Clifford Template Compilation:\texorpdfstring{\\}{ }%
Are Quantum Computers Useful for Compiling Quantum Circuits?}

\author{Keisuke Fujii}
\affiliation{Graduate School of Informatics, Kyoto University, Sakyo-ku, Kyoto, 606-8501, Japan}
\affiliation{Graduate School of Engineering Science, Osaka University, Toyonaka, Osaka 560-8531, Japan}
\affiliation{Center for Quantum Information and Quantum Biology, Osaka University, Toyonaka, Osaka 560-8531, Japan}
\affiliation{RIKEN Center for Quantum Computing (RQC), Hirosawa 2-1, Wako, Saitama 351-0198, Japan}

\date{September 27, 2026}

\begin{abstract}
A Clifford template is a finite ordered family of repeatable Clifford operations, and an instantiation specifies how many times each operation is applied.  The Clifford template compilation problem asks how to choose these repetition numbers so that the template realizes a target transformation of Pauli operators.  This problem arises, for example, when searching for logical operations in quantum error correction using only Clifford operations permitted by physical or fault-tolerance constraints.  Although forward Clifford dynamics is efficiently classically simulable, this inverse problem has sharp complexity transitions.  For commuting templates with unrestricted integer exponents, feasibility lies in $\mathrm{NP}\cap\mathrm{BQP}$ and a constructive quantum algorithm returns a particular solution together with the full exponent-relation lattice; already at $k=1$, recovering the repetition number contains finite-field discrete logarithm over $\F_{2^r}^{\times}$.  In general, restricting every exponent to $\{0,1\}$ removes the Abelian-group closure and makes feasibility NP-complete for variable $k$, even for exactly commuting CNOT-only operations and X-type Paulis.  For commuting self-inverse Clifford actions, both binary feasibility and recovery of one solution are classically polynomial-time solvable, but imposing a bound on the total repetition count is NP-complete, even for CNOT-only operations.  These results reveal a rich complexity landscape within Clifford template compilation, spanning classically tractable cases, problems admitting quantum polynomial-time algorithms, and NP-complete variants.
\end{abstract}

\maketitle

\section{Introduction}

\begin{figure*}[t]
\centering
\includegraphics[trim=130.22bp 3.25bp 135.88bp 7.88bp,clip,angle=-90,width=\textwidth]{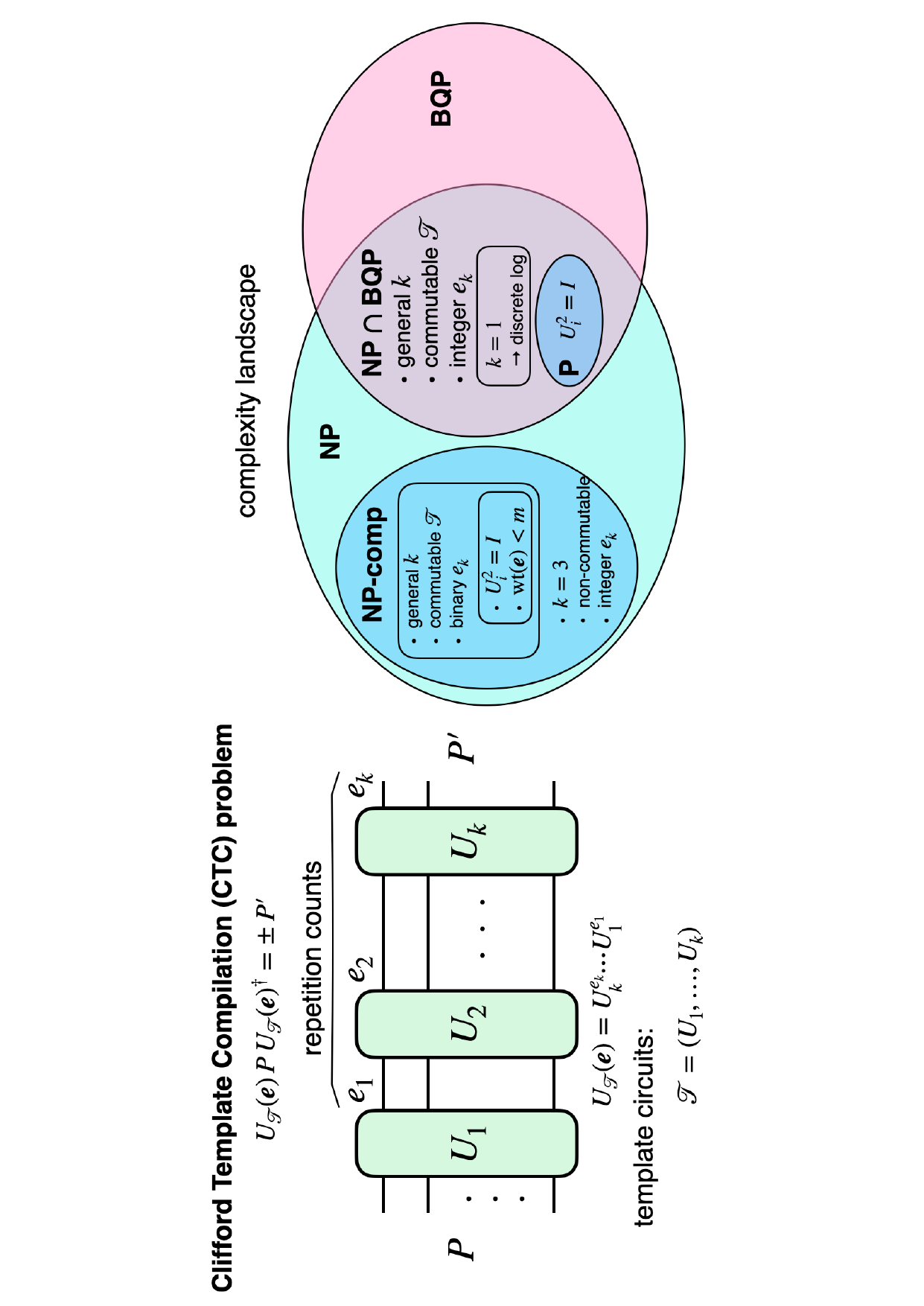}
\caption{Clifford template compilation and its complexity landscape.  Left: repetition counts instantiate an ordered Clifford template to map an input Pauli operator $P$ to a target $P'$, up to phase.  Right: commuting integer-exponent decision CTC lies in $\mathrm{NP}\cap\mathrm{BQP}$, whereas its binary-exponent variant is NP-complete for variable $k$.  For commuting self-inverse actions, feasibility is in $\mathrm P$, but imposing a bound on the number of template-operation applications is NP-complete.  Noncommuting integer-exponent decision CTC is NP-complete already at $k=3$.  The $k=1$ discrete-logarithm annotation refers to exponent recovery (search).  Commutativity is defined for the induced phase-free Pauli actions; the displayed condition $U_i^2=I$ is a sufficient self-inverse restriction.}
\label{fig:fig01}
\end{figure*}

Determining whether a desired quantum operation can be synthesized from a prescribed set of elementary operations is a basic problem of quantum gate synthesis and compilation.  For a generic $n$-qubit target, however, the worst-case circuit size grows exponentially with $n$ \cite{shende2006}.  Important applications nevertheless contain structured Clifford-compilation subproblems.  Clifford operations are central to quantum error correction with the stabilizer codes \cite{gottesman1997}, where they implement logical transformations, including addressable operations on selected logical qubits \cite{breuckmann2024,tansuwannont2026}.  Furthermore, in product-formula Hamiltonian simulation, many-body Pauli rotations are commonly implemented using Clifford basis changes around single-qubit rotations \cite{mukhopadhyay2023}.

In practice, the available Clifford operations are often restricted by hardware or fault-tolerance constraints.  Superconducting processors typically provide entangling gates between neighboring qubits, neutral-atom arrays allow structured shuttling of atoms, and fault-tolerant constructions favor protected logical operations such as transversal, fold-transversal, or code-automorphism gates \cite{arute2019,bluvstein2022,breuckmann2024,tansuwannont2026}.  These settings naturally supply finite families of structured Clifford operations that can be reused.  We call such a prescribed finite family a \emph{Clifford template}.  A natural inverse problem is to choose repetition counts that realize a desired physical or logical Clifford action.

At first sight this problem may appear classically easy.  Clifford circuits admit efficient tableau simulation by the Gottesman-Knill theorem, and the action of an $n$-qubit Clifford on Pauli operators is a binary symplectic linear map \cite{gottesman1997,aaronson2004}.  Thus, for any fixed repetition pattern one can propagate Paulis and verify the answer efficiently.  The subtlety has two sources.  First, a repetition count specified using polynomially many bits may have exponentially large numerical value, so that it can represent exponentially many applications of a single operation.  Second, when the number of available operations scales with the input, the number of possible repetition patterns can itself be exponential; even choosing only whether to apply each operation once or not at all gives exponentially many patterns.  The compilation problem asks us to invert these compressed families of classically simulable dynamics.

In this work, we study the Clifford Template Compilation (CTC) problem in a restricted but fundamental setting.  Given an ordered template $\mathcal{T}=(U_1,\ldots,U_k)$, its instantiation is specified by repetition counts $e_1,\ldots,e_k$ and acts as $U_{\mathcal{T}}(\boldsymbol e)=U_k^{e_k}\cdots U_1^{e_1}$.  The desired Clifford behavior is specified only by the image of a single Pauli operator $P$,
\begin{equation}
  U_{\mathcal{T}}(\boldsymbol e) P U_{\mathcal{T}}(\boldsymbol e)^\dagger=P',
\end{equation}
We refer to this restricted task as \emph{Pauli reachability}.  It specifies only the image of $P$, rather than the full action of $U_{\mathcal{T}}(\boldsymbol e)$.  We analyze both general ordered templates and the important restriction in which the induced phase-free Pauli actions commute.  We also distinguish unrestricted integer repetition counts from the binary-exponent variant in which every operation may be selected at most once.  As a subproblem of full Clifford template compilation, Pauli reachability yields lower bounds that apply to more expressive template-compilation tasks.  Figure~\ref{fig:fig01} gives an overview of the task and the complexity regimes studied here.

The problem exhibits a rich complexity landscape as we vary three simple features: whether the template operations commute, whether the exponents are binary or unrestricted integers, and whether the task is to decide existence or to find an explicit instantiation.  Across these closely related regimes, decision CTC ranges from NP-complete cases to a commuting integer-exponent case in $\mathrm{NP}\cap\mathrm{BQP}$, while the unrestricted integer search problem already contains finite-field discrete logarithm for a one-operation template.  Thus small changes in the formulation produce sharp complexity transitions within the same Clifford--Pauli reachability framework.

Our results are summarized as follows.
\begin{itemize}
\item If the template size $k$ is part of the input, binary-exponent commuting decision CTC is NP-complete even for CNOT-only operations and X-type Paulis.  Its search version is polynomial-time Turing reducible to the decision version using at most $k$ decision queries.
\item For commuting templates with unrestricted integer exponents, decision CTC is in $\mathrm{NP}\cap\mathrm{BQP}$ for arbitrary template size $k$.  Standard quantum algorithms for finite Abelian groups also solve search CTC directly in polynomial time and produce the full relation lattice \cite{shor1997,cheung2001,ivanyos2003,childs2014}.
\item For commuting self-inverse Clifford actions, binary and integer reachability coincide, and one binary solution can be found in deterministic classical polynomial time.  In contrast, deciding whether a solution uses at most $m$ template-operation applications is NP-complete, even for exactly commuting CNOT-only self-inverse operations (Sec.~\ref{sec:involutions}).
\item For a one-operation template ($k=1$), search CTC over qubits contains discrete logarithm in $\F_{2^r}^{\times}$.  The same construction for prime-dimensional qudits gives DLP in $\F_{p^r}^{\times}$, including $\F_p^\times$.
\item Without the commutativity restriction, integer-exponent decision CTC is NP-complete already for $k=3$.  The hardness holds for wire-permutation Clifford operations, hence for SWAP-only circuits, and follows by embedding membership in a product of three cyclic permutation groups.  The general noncommuting case $k=2$ remains open.
\end{itemize}

The main message is, of course, not that an NP-complete language is efficiently solved by a quantum computer.  Rather, CTC gives a single natural problem family in which Shor-type structured witness recovery and NP-complete Boolean selection occur in neighboring regimes.  The two hardness embeddings use different template-size and output regimes---integer search at $k=1$ for DLP, and binary decision at variable $k$ for NP-completeness---but they are expressed by the same Clifford--Pauli reachability relation rather than by unrelated encodings.  The distinctive point is therefore not either ingredient in isolation, but their realization as adjacent slices of one CNOT-only, phase-free Clifford framework.  In the binary-linear subfamily, moreover, the unitaries appearing in phase estimation are CNOT-only.  Coherently controlling them introduces Toffoli gates, but with substantially less non-Clifford overhead than modular-arithmetic tasks such as integer factoring, making this subfamily a promising candidate for an efficiently verifiable workload on early fault-tolerant quantum computers.

\section{Pauli and Clifford representation}

We represent an $n$-qubit Pauli operator up to phase by a vector
\begin{equation}
  a=(x,z)\in \F_2^{2n},
\end{equation}
where $x,z\in \F_2^n$ specify the X and Z supports.  An $n$-qubit Clifford unitary $U$ acts by conjugation as
\begin{equation}
  U P(a) U^\dagger = \pm P(Fa),
\end{equation}
where $F\in \Sp(2n,\F_2)$ preserves the standard symplectic form.  In this work phases are ignored.  For the NP-completeness construction below all gates are CNOTs and all Paulis are X-type, so the construction can also be read as a statement about reversible binary linear circuits embedded into Clifford circuits.

\begin{definition}[Clifford template]
A Clifford template is an ordered finite tuple
\begin{equation}
  \mathcal{T}=(U_1,\ldots,U_k)
\end{equation}
of Clifford operations designated for repeated use.  An exponent vector $\boldsymbol e=(e_1,\ldots,e_k)$ instantiates the template as
\begin{equation}
  U_{\mathcal{T}}(\boldsymbol e)=U_k^{e_k}\cdots U_1^{e_1}.
\end{equation}
We call $k$ the template size.  Throughout this work, ``commuting'' means commuting at the level of phase-free Pauli actions: if $F_i$ is the symplectic matrix induced by $U_i$, then $F_iF_j=F_jF_i$ for all $i,j$.  Exact commutativity of the Clifford unitaries implies this condition, but the converse need not hold; the unitary commutator may be a nontrivial Pauli operator up to global phase.
\end{definition}

\begin{definition}[Decision Clifford template compilation, $\mathrm{CTC}_{\mathrm{dec}}$]
An instance consists of a Clifford template $\mathcal{T}=(U_1,\ldots,U_k)$ and Pauli operators $P$ and $P'$.  The question is whether there exist nonnegative integers $e_1,\ldots,e_k$ such that
\begin{equation}
  U_{\mathcal{T}}(\boldsymbol e)P U_{\mathcal{T}}(\boldsymbol e)^\dagger=\pm P'.
  \label{eq:ctc}
\end{equation}
Computationally, the instance is represented by the induced matrices $F_1,\ldots,F_k\in\Sp(2n,\F_2)$ and the phase-free Pauli vectors $a,b\in\F_2^{2n}$.  The condition is equivalently
\begin{equation}
  F_k^{e_k}\cdots F_1^{e_1} a=b.
\end{equation}
We call this decision problem \emph{decision Clifford Template Compilation}, denoted $\mathrm{CTC}_{\mathrm{dec}}$, and use the symplectic notion of commutativity defined above.  Since every $F_i$ belongs to a finite group, allowing $e_i\in\Z$ gives the same decision problem: each exponent may be reduced modulo $\operatorname{ord}(F_i)$, the smallest positive integer $t$ such that $F_i^t=I$.  The \emph{binary-exponent variant} restricts every exponent to $e_i\in\{0,1\}$.
\end{definition}

\begin{definition}[Search Clifford template compilation, $\mathrm{CTC}_{\mathrm{search}}$]
For the same input, search CTC asks, on every yes-instance, for one exponent vector $\boldsymbol e$ satisfying Eq.~\eqref{eq:ctc}, with each coordinate chosen in the bounded range
\begin{equation}
  0\leq e_i<|\Sp(2n,\F_2)|.
  \label{eq:bounded-search-range}
\end{equation}
Such a representative always exists by periodicity; on a no-instance the set of valid outputs is empty.  We use \emph{search CTC} and \emph{constructive CTC} synonymously when only one witness is requested.  Producing a particular solution $\boldsymbol e^{(0)}$ together with a basis of all exponent relations is a stronger relation-lattice output task.  Below, an unqualified language-complexity statement about CTC refers to $\mathrm{CTC}_{\mathrm{dec}}$; search claims are stated explicitly.
\end{definition}

Decision CTC is in NP, and so are its commuting and binary-exponent variants.  This follows directly from the efficient classical simulation of Clifford circuits together with their finite periodicity.  If a solution exists, each exponent can be reduced modulo $\operatorname{ord}(F_i)\le |\Sp(2n,\F_2)|$, and hence has $O(n^2)$ bits.  Given such an exponent vector, one evaluates the ordered product by repeated squaring and checks Eq.~\eqref{eq:ctc} in polynomial time.  The corresponding search relation is therefore polynomially balanced and polynomial-time verifiable, and hence lies in FNP.  We use FBQP in the relation-search sense: on a yes-instance, a bounded-error quantum polynomial-time algorithm outputs a valid witness.

\section{NP-completeness of binary commuting CTC}

We first show that the binary-exponent variant of commuting decision CTC becomes NP-complete when the template size $k$ is allowed to scale.  The reduction is from Positive 1-in-3 SAT.  An instance is a conjunction
\begin{equation}
  \Phi=C_1\wedge\cdots\wedge C_m
\end{equation}
of constraints called \emph{clauses}.  Each clause is a triple of Boolean variables,
\begin{equation}
  C_c=(x_{i_c},x_{j_c},x_{\ell_c}),
\end{equation}
and ``positive'' means that none of the variables is negated.  An assignment satisfies $C_c$ when exactly one of its three variables is true, or equivalently when
\begin{equation}
  e_{i_c}+e_{j_c}+e_{\ell_c}=1,
  \qquad e_i\in\{0,1\}.
\end{equation}
It satisfies $\Phi$ when this condition holds for every clause simultaneously.  For example, the clause $(x_1,x_2,x_3)$ is satisfied by $(e_1,e_2,e_3)=(1,0,0)$, $(0,1,0)$, or $(0,0,1)$, but not by $(0,0,0)$ or by an assignment with two or three true variables.  Variables may appear in several clauses, coupling these local exactly-one constraints.  Positive 1-in-3 SAT is NP-complete \cite{schaefer1978}.

\subsection{A three-qubit clause gadget}

For each clause $C_c$ introduce three qubits $(c,1),(c,2),(c,3)$ and define
\begin{equation}
  G_c=\mathrm{CNOT}_{(c,3)\rightarrow(c,2)}
      \mathrm{CNOT}_{(c,2)\rightarrow(c,1)} .
\end{equation}
The corresponding circuit is shown in Fig.~\ref{fig:clause-gadget}.
\begin{figure}[t]
\centering
\setlength{\unitlength}{1pt}
\begin{picture}(220,64)
  \put(0,49){$(c,1)$}
  \put(0,29){$(c,2)$}
  \put(0,9){$(c,3)$}
  \put(42,52){\line(1,0){168}}
  \put(42,32){\line(1,0){168}}
  \put(42,12){\line(1,0){168}}
  \put(105,32){\circle*{4}}
  \put(105,32){\line(0,1){20}}
  \put(105,52){\circle{12}}
  \put(99,52){\line(1,0){12}}
  \put(105,46){\line(0,1){12}}
  \put(165,12){\circle*{4}}
  \put(165,12){\line(0,1){20}}
  \put(165,32){\circle{12}}
  \put(159,32){\line(1,0){12}}
  \put(165,26){\line(0,1){12}}
\end{picture}
\caption{Three-qubit clause gadget implementing $G_c$.  Circuit time runs from left to right.}
\label{fig:clause-gadget}
\end{figure}
On X supports in the ordered basis $((c,1),(c,2),(c,3))$, this acts as
\begin{equation}
  J=
  \begin{pmatrix}
  1&1&0\\
  0&1&1\\
  0&0&1
  \end{pmatrix}
  \in \GL(3,\F_2).
\end{equation}
We take $(0,0,1)^T$, corresponding to the Pauli operator $I_{(c,1)}I_{(c,2)}X_{(c,3)}$, as the initial X-support vector.  Its orbit under $J$ is
\begin{align}
J^0(0,0,1)^T&=(0,0,1)^T,\\
J^1(0,0,1)^T&=(0,1,1)^T,\\
J^2(0,0,1)^T&=(1,0,1)^T,\\
J^3(0,0,1)^T&=(1,1,1)^T,
\end{align}
and $J^4=I$.  Hence the four possible numbers of selected variables in a clause, $0,1,2,3$, are distinguished by the resulting X-type Pauli.  As explained in detail in the next subsection, each clause contains three binary variables.  It therefore suffices to count up to three to determine how many of them are assigned $1$.  The desired exactly-one value is represented by $(0,1,1)^T$, corresponding to the Pauli operator $I_{(c,1)}X_{(c,2)}X_{(c,3)}$.

\subsection{Variable-size template}

Let the Positive 1-in-3 SAT instance have variables $x_1,\ldots,x_k$ and clauses $C_1,\ldots,C_m$.  For each variable $x_i$, define a Clifford operation
\begin{equation}
  U_i=\prod_{\{c\mid x_i\in C_c\}} G_c .
\end{equation}
The resulting CTC template is $\mathcal{T}=(U_1,\ldots,U_k)$.  Identifying a Boolean assignment with $\boldsymbol x=(x_1,\ldots,x_k)\in\{0,1\}^k$, we instantiate the template as $U_{\mathcal{T}}(\boldsymbol x)=U_k^{x_k}\cdots U_1^{x_1}$.  Thus the SAT variable $x_i$ directly specifies whether $U_i$ is selected.

The product is over disjoint clause blocks except where the same clause is involved, and on a shared block every $U_i$ applies either $G_c$ or $I$.  Therefore all $U_i$ commute.  In fact, these circuits commute exactly as unitaries, which is stronger than the phase-free commutativity required by our definition.  Moreover every $U_i$ is CNOT-only and has order dividing four.

Define the initial and target Paulis
\begin{equation}
  P_a=\prod_{c=1}^{m} X_{(c,3)},\qquad
  P_b=\prod_{c=1}^{m} X_{(c,2)}X_{(c,3)} .
\end{equation}
In clause $C_c$, the total exponent applied to $G_c$ is
\begin{equation}
  s_c=\sum_{\{i\mid x_i\in C_c\}} x_i\in\{0,1,2,3\}.
\end{equation}
By the orbit table above, the clause block maps $X_{(c,3)}$ to $X_{(c,2)}X_{(c,3)}$ if and only if $s_c=1$.  Thus
\begin{equation}
  U_{\mathcal{T}}(\boldsymbol x)P_a U_{\mathcal{T}}(\boldsymbol x)^\dagger=P_b
\end{equation}
if and only if the SAT instance has an exactly-one satisfying assignment.

\begin{theorem}
Binary-exponent commuting decision CTC is NP-complete when the template size $k$ is part of the input.  NP-hardness holds even when all template operations are CNOT-only Clifford circuits of order dividing four and both input and target Paulis are X-type.
\end{theorem}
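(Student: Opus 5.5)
The proof splits into membership in NP and NP-hardness, and the hardness part is essentially the reduction already built in the two preceding subsections.

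\textbf{Membership in NP.} A certificate is a binary exponent vector $\boldsymbol e\in\{0,1\}^k$, which has length $k$ and is therefore polynomial in the input size. The verifier multiplies the at most $k$ selected symplectic matrices $F_i$, each in $\Sp(2n,\F_2)$, applies the product to $a$, and compares the result with $b$. This takes $O(kn^3)$ field operations. Nothing about commutativity needs to be checked, because the definition of the problem promises it; the argument given earlier for general decision CTC therefore already covers this variant.

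\textbf{NP-hardness.} I would reduce from Positive 1-in-3 SAT, which is NP-complete \cite{schaefer1978}, using the clause-gadget construction. For an instance with $k$ variables and $m$ clauses, the construction uses $n=3m$ qubits and the template $U_i=\prod_{\{c\mid x_i\in C_c\}}G_c$. The Paulis are $P_a=\prod_c X_{(c,3)}$ and $P_b=\prod_c X_{(c,2)}X_{(c,3)}$. Writing out each $U_i$ takes at most $2m$ CNOTs, so the reduction runs in polynomial time.

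Next I would confirm the structural claims in the theorem.
\begin{itemize}
\item Each $U_i$ is a product of copies of $G_c$ acting on pairwise disjoint clause blocks. If a variable appears more than once in the same clause, the factor becomes $G_c^{r}$; I would normalise clauses to have distinct variables, which is standard. Then $U_i$ acts on block $c$ as either $J$ or $I$.
\item Any two $U_i,U_j$ act on each block by powers of the same $J$, so they commute exactly, as unitaries and not only as symplectic maps.
\item Since $J^4=I$, each $U_i$ has order dividing four. All gates are CNOTs and both Paulis are X-type.
\end{itemize}

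\textbf{Correctness of the reduction.} Because the action factorises over blocks, the product $U_k^{x_k}\cdots U_1^{x_1}$ acts on block $c$ as $J^{s_c}$, with $s_c\in\{0,1,2,3\}$ the number of true variables in clause $c$. The orbit table shows that the four vectors $J^{s}(0,0,1)^T$ for $s=0,1,2,3$ are distinct. Hence $J^{s_c}(0,0,1)^T=(0,1,1)^T$ if and only if $s_c=1$.

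The target condition $U_{\mathcal T}(\boldsymbol x)P_aU_{\mathcal T}(\boldsymbol x)^\dagger=\pm P_b$ holds blockwise, so it is equivalent to $s_c=1$ for every $c$. That is exactly the statement that $\boldsymbol x$ satisfies $\Phi$.

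The only step needing care is this blockwise bookkeeping: the factorisation over disjoint blocks, and the injectivity of $s\mapsto J^s(0,0,1)^T$ on $\{0,1,2,3\}$. Both follow directly from the orbit computation, so I do not expect a real obstacle.
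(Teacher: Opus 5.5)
Your proposal is correct and follows the paper's proof: NP membership by directly checking a binary witness, and NP-hardness by the Positive 1-in-3 SAT clause-gadget reduction, where the count $s_c$ in each clause block is read off from the orbit of $(0,0,1)^T$ under $J$. Your remark about a variable repeated within one clause addresses a point that the paper's set-indexed product $\prod_{\{c\mid x_i\in C_c\}}G_c$ silently assumes away. Either of your fixes works: normalize clauses to distinct variables, or use $G_c^r$ so that $s_c$ counts occurrences.
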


\begin{proof}
Membership in NP was noted above.  The construction maps a Positive 1-in-3 SAT instance to a binary-exponent commuting CTC instance in polynomial time.  The preceding argument shows that satisfying assignments are in one-to-one correspondence with Boolean vectors $\boldsymbol x\in\{0,1\}^k$ satisfying the Pauli-orbit equation.  Hence binary-exponent commuting CTC is NP-hard.
\end{proof}

The binary restriction makes witness recovery directly self-reducible to decision.  Set $a_1=a$ and, at step $i$, maintain the invariant
\begin{equation}
  F_k^{e_k}\cdots F_i^{e_i}a_i=b.
  \label{eq:binary-self-reduction-invariant}
\end{equation}
Starting from a yes-instance, process the generators in chronological order $i=1,\ldots,k$.  For $F_i$, query the decision oracle on the remaining template $(F_{i+1},\ldots,F_k)$ with initial vector $a_i$ and target $b$, which tests whether a solution with $e_i=0$ remains.  If the answer is yes, set $e_i=0$ and $a_{i+1}=a_i$.  If it is no, the invariant and the promise that the current instance is a yes-instance force $e_i=1$; set $a_{i+1}=F_i a_i$ and continue.  For an empty remaining template, the query simply tests $a_i=b$.  After at most $k$ queries this recovers a complete witness.  Thus there is a polynomial-time Turing reduction from binary search CTC to binary decision CTC using at most $k$ decision queries.
This argument respects the given order and therefore applies to both commuting and noncommuting binary templates.

\subsection{A complete nine-qubit example}

Consider the Positive 1-in-3 SAT formula
\begin{equation}
  \Phi=C_1\wedge C_2\wedge C_3
\end{equation}
with
\begin{align}
  C_1&=(x_1,x_2,x_3), &
  C_2&=(x_1,x_4,x_5),\nonumber\\
  C_3&=(x_2,x_4,x_6).
\end{align}
Assign three qubits to each clause: qubits $(1,2,3)$ to $C_1$, $(4,5,6)$ to $C_2$, and $(7,8,9)$ to $C_3$.  The three clause operations are
\begin{align}
  G_1&=\mathrm{CNOT}_{3\rightarrow2}\mathrm{CNOT}_{2\rightarrow1},\nonumber\\
  G_2&=\mathrm{CNOT}_{6\rightarrow5}\mathrm{CNOT}_{5\rightarrow4},\nonumber\\
  G_3&=\mathrm{CNOT}_{9\rightarrow8}\mathrm{CNOT}_{8\rightarrow7}.
\end{align}
Following the occurrence pattern of the six variables, define the template operations
\begin{align}
  U_1&=G_1G_2, & U_2&=G_1G_3, & U_3&=G_1,\nonumber\\
  U_4&=G_2G_3, & U_5&=G_2,    & U_6&=G_3.
\end{align}
They commute because the $G_c$ act on disjoint clause blocks.  The initial and target Paulis are
\begin{equation}
  P_a=X_3X_6X_9,\qquad
  P_b=(X_2X_3)(X_5X_6)(X_8X_9).
\end{equation}
The assignment
\begin{equation}
  (x_1,x_2,x_3,x_4,x_5,x_6)=(1,0,0,0,0,1)
\end{equation}
sets only $x_1$ and $x_6$ to true.  Thus $x_1$ is the unique true variable in both $C_1$ and $C_2$, while $x_6$ is the unique true variable in $C_3$.  Correspondingly,
\begin{equation}
  U_6U_1=G_3G_1G_2,\qquad
  U_6U_1P_aU_1^\dagger U_6^\dagger=P_b.
\end{equation}

The same instance can be written explicitly in the binary symplectic representation.  Let $I_3$ denote the $3\times3$ identity and let $J$ be the clause matrix defined above.  On the nine-dimensional X support, the six template operations act as
\begin{align}
  A_1&=J\oplus J\oplus I_3,
  &A_2&=J\oplus I_3\oplus J,\nonumber\\
  A_3&=J\oplus I_3\oplus I_3,
  &A_4&=I_3\oplus J\oplus J,\nonumber\\
  A_5&=I_3\oplus J\oplus I_3,
  &A_6&=I_3\oplus I_3\oplus J.
\end{align}
The corresponding symplectic matrices and Pauli vectors are
\begin{equation}
  F_i=
  \begin{pmatrix}
    A_i&0\\
    0&A_i^{-T}
  \end{pmatrix}
  \in\Sp(18,\F_2),
\end{equation}
\begin{align}
  a&=(0,0,1,\,0,0,1,\,0,0,1;\,\mathbf{0}_9)^T,\nonumber\\
  b&=(0,1,1,\,0,1,1,\,0,1,1;\,\mathbf{0}_9)^T.
\end{align}
Here the semicolon separates the X and Z supports.
For this Boolean assignment, the Pauli vectors above obey $F_6F_1a=b$.  This gives a fully explicit nine-qubit instance produced by the NP-hardness reduction.

\section{Noncommuting templates: NP-complete at \texorpdfstring{$k=3$}{k=3}}

We now remove the commutativity assumption and show that decision CTC is NP-complete already for a template of fixed size $k=3$.  Our reduction uses the following permutation-group knapsack problem.  Given permutations $\pi_1,\pi_2,\pi_3,\pi\in S_m$, decide whether there exist nonnegative integers $e_1,e_2,e_3$ such that
\begin{equation}
  \pi_3^{e_3}\pi_2^{e_2}\pi_1^{e_1}=\pi.
  \label{eq:permutation-knapsack}
\end{equation}
Membership in a product of three cyclic permutation groups is NP-complete \cite{lohrey2025}.  Reversing the list of input generators gives the displayed chronological convention without changing the complexity.  Because each permutation has finite order, every exponent in Eq.~\eqref{eq:permutation-knapsack} has an equivalent canonical representative satisfying
\begin{equation}
  0\le e_i<\operatorname{ord}(\pi_i).
\end{equation}
The orders and their binary encodings are computable in polynomial time from the cycle decompositions.

For a permutation $\sigma\in S_m$, let $W(\sigma)$ be the Clifford operation that applies the wire permutation $\sigma$ independently to each of $m$ blocks of $m$ qubits.  Thus the construction uses $m^2$ qubits indexed by $(j,\ell)\in[m]\times[m]$.  Each $W(\sigma)$ is a SWAP-only Clifford circuit of polynomial size, and the map $\sigma\mapsto W(\sigma)$ is a faithful representation.  Define X-type Paulis
\begin{equation}
  P_a=\prod_{j=1}^{m}X_{(j,j)},\qquad
  P_b=\prod_{j=1}^{m}X_{(j,\pi(j))}.
\end{equation}
For every $\sigma\in S_m$,
\begin{equation}
  W(\sigma)P_aW(\sigma)^\dagger
  =\prod_{j=1}^{m}X_{(j,\sigma(j))}.
  \label{eq:faithful-pauli-encoding}
\end{equation}
The $j$th block records the $j$th image of the permutation, so equality with $P_b$ holds if and only if $\sigma=\pi$.  Equation~\eqref{eq:faithful-pauli-encoding} embeds equality of the complete permutation into the image of a single Pauli operator.

\begin{theorem}
Decision CTC is NP-complete for the fixed template size $k=3$.  NP-hardness holds even when the three template operations are wire-permutation Clifford operations, hence SWAP-only circuits, and the input and target Paulis are X-type.
\end{theorem}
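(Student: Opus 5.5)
Membership in NP is already established by the general argument given after the definition of search CTC: every exponent reduces modulo $\operatorname{ord}(F_i)\le|\Sp(2n,\F_2)|$ and so has $O(n^2)$ bits, and a candidate exponent vector is checked by repeated squaring. So the plan is to prove NP-hardness by a polynomial-time many-one reduction from the three-generator permutation knapsack problem of Eq.~\eqref{eq:permutation-knapsack}, which is NP-complete \cite{lohrey2025}.

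\textbf{The reduction.} Given $\pi_1,\pi_2,\pi_3,\pi\in S_m$, output the template $\mathcal T=(W(\pi_1),W(\pi_2),W(\pi_3))$ on $m^2$ qubits, together with the X-type Paulis $P_a$ and $P_b$ defined above. Each $W(\sigma)$ is $m$ parallel copies of a wire permutation on $m$ qubits. Writing $\sigma$ as a product of at most $m-1$ transpositions gives a SWAP circuit with $O(m^2)$ gates in total. The induced symplectic matrix is $\mathrm{diag}(Q_\sigma,Q_\sigma)$, with $Q_\sigma$ the $m^2\times m^2$ permutation matrix. So the whole instance, in either circuit or matrix form, is produced in polynomial time.

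\textbf{Correctness.} Since $\sigma\mapsto W(\sigma)$ is a homomorphism (one should fix the composition convention so that $W(\sigma)W(\tau)=W(\sigma\tau)$), we get
\[
U_{\mathcal T}(\boldsymbol e)=W(\pi_3)^{e_3}W(\pi_2)^{e_2}W(\pi_1)^{e_1}=W(\pi_3^{e_3}\pi_2^{e_2}\pi_1^{e_1}).
\]
Applying Eq.~\eqref{eq:faithful-pauli-encoding} with $\sigma=\pi_3^{e_3}\pi_2^{e_2}\pi_1^{e_1}$ shows that the conjugated Pauli is $\prod_j X_{(j,\sigma(j))}$.

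This equals $P_b$ up to phase exactly when $\sigma(j)=\pi(j)$ for every $j$, that is, when $\sigma=\pi$. Two facts make this work. First, X-type Pauli strings are determined by their support. Second, block $j$ contains exactly one $X$, located at column $\sigma(j)$. Hence exponents solve the CTC instance iff they solve the knapsack instance. Phases are irrelevant because SWAP conjugation of X-strings introduces no sign.

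\textbf{Main obstacle.} The delicate point is convention bookkeeping rather than any hard estimate. We must ensure that the chronological order $U_3^{e_3}U_2^{e_2}U_1^{e_1}$ matches the order of the product in Eq.~\eqref{eq:permutation-knapsack}. We must also ensure that $W$ is a homomorphism rather than an anti-homomorphism, under whichever action convention (qubit at position $\ell$ moved to $\sigma(\ell)$) makes Eq.~\eqref{eq:faithful-pauli-encoding} hold. If it turns out to be an anti-homomorphism, I would define $W'(\sigma)=W(\sigma^{-1})$, or reverse the generator list as the text already notes. Both fixes preserve NP-hardness. Finally, the template operations are generally noncommuting, which is exactly the regime this theorem concerns, so no commutativity needs to be checked.
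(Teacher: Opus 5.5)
Your proposal is correct and is essentially the paper's proof: NP membership comes from periodic exponent reduction plus repeated squaring, and NP-hardness comes from setting $U_i=W(\pi_i)$ with the X-type block encoding $P_a,P_b$, so that the instantiated template reaches $P_b$ exactly when $\pi_3^{e_3}\pi_2^{e_2}\pi_1^{e_1}=\pi$. Your convention check is bookkeeping the paper leaves implicit, and it resolves cleanly: with the action that makes Eq.~\eqref{eq:faithful-pauli-encoding} hold, $W$ is already a homomorphism for $(\sigma\tau)(x)=\sigma(\tau(x))$, so the fallback $W'(\sigma)=W(\sigma^{-1})$ is never needed (and if it were, the target would also have to encode $\pi^{-1}$).
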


\begin{proof}
Membership in NP follows by periodicity and binary exponentiation.  Given an instance of Eq.~\eqref{eq:permutation-knapsack}, set $U_i=W(\pi_i)$ and choose $P_a,P_b$ as above.  The instantiated template maps $P_a$ to $P_b$ exactly when
\begin{equation}
  \pi_3^{e_3}\pi_2^{e_2}\pi_1^{e_1}=\pi.
\end{equation}
The construction has polynomial size, so NP-hardness follows from permutation-group 3-knapsack.
\end{proof}

\section{Commuting CTC: \texorpdfstring{$\mathrm{NP}\cap\mathrm{BQP}$}{NP cap BQP}}

\subsection{Quantum algorithm for unrestricted integer exponents}

We now keep the commuting Clifford--Pauli reachability equation of Sec.~III unchanged while enlarging the exponent domain from $\{0,1\}^k$ to $\Z_{\geq0}^k$, equivalently $\Z^k$ by periodicity.  Counterintuitively, this enlargement restores the Abelian-group closure absent from the Boolean cube.  We distinguish decision from search explicitly: decision CTC asks only whether a repetition vector exists, whereas search CTC asks for such a vector.  In the commuting setting both tasks reduce to finite-Abelian-group membership, but the constructive task additionally requires coordinates of the target in terms of the supplied generators.  In this section, $k$ is allowed to scale with the input size; the fixed-$k$ setting is included as a special case.

The binary self-reduction described after Theorem~1 does not extend in an evident way to unrestricted integer exponents, which are defined modulo potentially large generator orders.  For example, if an orbit has odd order, replacing $F$ by $F^2$ leaves the same cyclic subgroup and therefore does not distinguish the parity of an exponent.  More sharply, in the one-generator DLP embedding of Sec.~VI, decision can be yes for every nonzero target when the input element generates the full multiplicative group, while search still requires recovery of the discrete logarithm.  This illustrates why the binary self-reduction has no direct analogue here.  In particular, the BQP upper bound for commuting integer-exponent decision CTC does not by itself yield witness recovery; the Abelian hidden-subgroup procedure below directly returns the required exponent coordinates.

\begin{theorem}
For arbitrary template size $k$, commuting decision CTC is in $\mathrm{NP}\cap\mathrm{BQP}$.  Commuting search CTC is solvable in quantum polynomial time.  More strongly, when a solution exists, the algorithm produces a particular exponent solution and a basis of the full relation lattice.
\end{theorem}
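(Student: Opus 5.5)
The plan is to reduce the commuting problem to constructive membership in a finite Abelian group, and then to invoke the standard Abelian hidden-subgroup machinery. Membership of decision CTC in NP was already established after the definitions, so only the BQP and search parts need an argument.

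\emph{Setting up the group.} Let $A=\langle F_1,\ldots,F_k\rangle\le\Sp(2n,\F_2)$. This group is Abelian by the commutativity assumption. Its elements have polynomial-size encodings, namely $2n\times 2n$ binary matrices. Multiplication, inversion and equality testing are classically efficient, so they are also efficient as reversible quantum circuits. Because $A$ is Abelian, the ordered product $F_k^{e_k}\cdots F_1^{e_1}$ depends only on $\boldsymbol e\in\Z^k$. The map $\phi:\Z^k\to A$, $\boldsymbol e\mapsto\prod_i F_i^{e_i}$, is therefore a group homomorphism, and the CTC condition reads $\phi(\boldsymbol e)a=b$.

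\emph{Passing from the Pauli image to group membership.} The CTC condition is not yet membership of a target matrix in $A$, since only the image of $a$ is prescribed. I would handle this with the stabilizer of $a$. Consider the homomorphism
\begin{equation}
  \psi:\Z^{k}\times\Z\to A\cdot a\cup\{\ldots\},
\end{equation}
which is not quite a homomorphism, so I instead use the following hiding function. Define
\begin{equation}
  f(\boldsymbol e,t)=\phi(\boldsymbol e)\,a\ \text{ if } t\equiv0,
\end{equation}
or better, work directly with the orbit map $g(\boldsymbol e)=\phi(\boldsymbol e)a$. The function $g$ is constant on cosets of the lattice $L=\phi^{-1}(\mathrm{Stab}_A(a))$. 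It is also distinct on distinct cosets, because $\phi(\boldsymbol e)a=\phi(\boldsymbol e')a$ iff $\phi(\boldsymbol e-\boldsymbol e')a=a$. Hence $g$ hides the full-rank lattice $L\supseteq\ker\phi$ in $\Z^k$. The lattice $L$ contains $\mathrm{ord}(F_i)\Z e_i$, and each order is at most $|\Sp|$, with $O(n^2)$ bits. I would truncate to the finite group $\prod_i\Z_{N}$ with $N=\mathrm{lcm}$ of the orders. Alternatively, I would avoid computing the orders and instead use the Cheung–Mosca/Ivanyos et al.\ lattice-HSP over $\Z^k$ with known bound $|\Sp(2n,\F_2)|$.

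To decide membership of $b$ in the orbit and to find a preimage, I would use the extended function on $\Z^{k}\times\Z$:
\begin{equation}
  h(\boldsymbol e,t)=\phi(\boldsymbol e)\,\Phi_b^{\,t},
\end{equation}
where this requires a group element carrying $b$. That is unavailable, so the cleaner choice is the orbit-coset formulation. Run the Abelian HSP on $g$ to obtain a basis of $L$. Then solve the orbit discrete-log instance by applying the HSP to $(\boldsymbol e,t)\mapsto$ [the pair $\phi(\boldsymbol e)a$ if $t=0$, $\phi(\boldsymbol e)b$-type shift]. Concretely, I would use the standard trick for Abelian group actions (as in Childs' survey and Kuperberg-free Abelian orbit problems). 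Since $A$ is Abelian and acts on the orbit $A a$, the orbit is a torsor for $A/\mathrm{Stab}(a)$. If $b=\phi(\boldsymbol e^{(0)})a$, then $\phi(\boldsymbol e)b=\phi(\boldsymbol e+\boldsymbol e^{(0)})a$. Define $H(\boldsymbol e,t)=\phi(\boldsymbol e)\cdot c_t$ with $c_0=a$, $c_1=b$, and $t\in\Z$ read as $\phi(\boldsymbol e)$ applied to $a$ or $b$ according to the parity of $t$. A cleaner and equivalent route is to query the HSP on $\Z^k\times\Z_2$ for the function $(\boldsymbol e,t)\mapsto\phi(\boldsymbol e)c_t$. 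This function is periodic under $(\boldsymbol e^{(0)},1)$ exactly when $b\in Aa$, which is the condition for the hidden subgroup to contain an element with last coordinate $1$. The returned generators then either exhibit such an element $(\boldsymbol e^{(0)},1)$, giving a particular solution, or certify a no-instance. Their $t=0$ part gives $L$. Every solution is $\boldsymbol e^{(0)}+L$, and reducing coordinates modulo the orders places the witness in the range \eqref{eq:bounded-search-range}. Every candidate is verified classically, so errors are one-sided after amplification.

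\emph{Main obstacle.} I expect the technical step to be justifying the hidden-subgroup algorithm on the infinite group $\Z^k\times\Z_2$ with a hidden lattice of unknown but bounded index. This requires choosing a truncation $\Z_M^k$ with $M$ polynomially many bits, and controlling the non-periodicity error as in Shor/Cheung–Mosca. I would cite the finite-index lattice HSP results \cite{cheung2001,ivanyos2003,childs2014}, using the a priori bound $|A|\le|\Sp(2n,\F_2)|=2^{O(n^2)}$. The remaining steps are routine: computing a Hermite/Smith basis of the returned relations and checking $F_k^{e_k}\cdots F_1^{e_1}a=b$ by repeated squaring.
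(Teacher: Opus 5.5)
There is a genuine gap at the step that is supposed to decide orbit membership and produce $\boldsymbol e^{(0)}$. The function $f(\boldsymbol e,t)=\phi(\boldsymbol e)c_t$ on $\Z^k\times\Z_2$ is not, in general, a hiding function for any subgroup. Suppose $b=\phi(\boldsymbol e^{(0)})a$. Then $f(\boldsymbol e,0)=f(\boldsymbol e-\boldsymbol e^{(0)},1)$ puts $(\boldsymbol e^{(0)},1)$ into the would-be hidden subgroup. Likewise $f(\boldsymbol e,1)=f(\boldsymbol e+\boldsymbol e^{(0)},0)$ puts $(-\boldsymbol e^{(0)},1)$ there. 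Their difference $(2\boldsymbol e^{(0)},0)$ would then force $\phi(2\boldsymbol e^{(0)})a=a$, which is false in general. In the paper's $\F_8$ example, $L=7\Z$ and $e^{(0)}=5$, but $10\notin 7\Z$. The parity-indexed variant on $\Z^k\times\Z$ fails for the same reason, since it has period $(0,2)$.

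What you have actually written down is a hidden-shift problem, $f_1(\boldsymbol e)=f_0(\boldsymbol e+\boldsymbol e^{(0)})$. Its natural hidden subgroup lives in the non-Abelian generalized dihedral group $\Z^k\rtimes\Z_2$. Treating the orbit $Aa$ as an abstract torsor puts you in the vectorization problem for Abelian group actions (the CSIDH-type problem), for which only Kuperberg-type subexponential quantum algorithms are known. So the Abelian-HSP results you cite do not give BQP for decision or FBQP for search along this route. By comparison, the truncation of $\Z^k$ that you flag as the main obstacle is routine.

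The missing idea is exactly the ``group element carrying $b$'' that you set aside as unavailable. Linearity together with commutativity lets you build it classically, as follows.
\begin{itemize}
\item Close $\{a\}$ under the $F_i$ by Gaussian elimination. This gives the orbit span $W=\mathrm{span}_{\F_2}\{\boldsymbol F^{\boldsymbol e}a\}$ with a basis $w_j=\boldsymbol F^{\boldsymbol\alpha_j}a$, where you record the exponent vectors $\boldsymbol\alpha_j$ and take $w_1=a$.
\item Reject if $b\notin W$. Otherwise set $B_i=F_i|_W$ and define $C_b$ on $W$ by $C_bw_j=\boldsymbol F^{\boldsymbol\alpha_j}b$.
\item Because the image of $a$ determines the action of a commuting product on every $w_j$, one gets $b=\boldsymbol F^{\boldsymbol e}a\iff C_b=\boldsymbol B^{\boldsymbol e}$.
\end{itemize}
This converts the orbit problem into constructive membership of $C_b$ in the finite Abelian matrix group $\langle B_1,\ldots,B_k\rangle$, after rejecting a singular $C_b$ or one that fails to commute with some $B_i$. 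That is a genuine Abelian HSP, for example with hiding function $(\boldsymbol e,t)\mapsto\boldsymbol B^{\boldsymbol e}C_b^{\,t}$ on $\Z^{k+1}$. The paper solves it by Cheung--Mosca decomposition followed by Smith normal form, which returns $\boldsymbol e^{(0)}$ and the relation lattice.

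Your first step survives intact. The map $g(\boldsymbol e)=\phi(\boldsymbol e)a$ does hide $L$, and $L$ equals the paper's $\Lambda=\{\boldsymbol z:\boldsymbol B^{\boldsymbol z}=I\}$, because a commuting product fixes $a$ if and only if it fixes every $w_j$.
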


\begin{proof}
Let $\boldsymbol F=(F_1,\ldots,F_k)$ and, for $\boldsymbol e=(e_1,\ldots,e_k)\in\Z^k$, define the multi-index notation
\begin{equation}
  \boldsymbol F^{\boldsymbol e}
  :=F_k^{e_k}\cdots F_1^{e_1}.
\end{equation}
Let
\begin{equation}
  W=\mathrm{span}_{\F_2}\{\boldsymbol F^{\boldsymbol e}a:\boldsymbol e\in\Z^k\}.
\end{equation}
The integer exponents describe the full periodic group orbit.  Since each $F_i$ has finite order, the same orbit is generated by nonnegative exponents.  Starting with $a$, repeatedly apply every $F_i$ to the current basis vectors and add any linearly independent result.  At most $2n$ vectors can be added, so this procedure terminates after polynomially many matrix-vector operations and produces a basis $w_1,\ldots,w_d$ of $W$.  During the procedure we record an exponent vector $\boldsymbol\alpha_j\in\Z^k$ such that
\begin{equation}
  w_j=\boldsymbol F^{\boldsymbol\alpha_j}a,
  \qquad
  \boldsymbol\alpha_j=(\alpha_{j,1},\ldots,\alpha_{j,k})\in\Z^k,
\end{equation}
with $w_1=a$ and $\boldsymbol\alpha_1=0$.

If $b\notin W$, no orbit element can equal $b$, and the answer is no.  Suppose therefore that $b\in W$.  Membership of $b$ in $W$ can be tested efficiently, but $b\in W$ does not by itself imply that there exists an $\boldsymbol e$ satisfying $b=\boldsymbol F^{\boldsymbol e}a$.

At this stage, $b=\boldsymbol F^{\boldsymbol e}a$ is an orbit-membership condition rather than a group-membership condition: the target $b$ is a vector, not a group element, and its value alone does not determine the action of $\boldsymbol F^{\boldsymbol e}$ on the full ambient space.  On the orbit span $W$, however, commutativity implies that the image of $a$ determines the action on every orbit vector, since
\begin{equation}
  \boldsymbol F^{\boldsymbol e}
  \boldsymbol F^{\boldsymbol\alpha_j}a
  =\boldsymbol F^{\boldsymbol\alpha_j}b.
\end{equation}
This observation motivates converting the target vector $b$ into an operator on $W$.  Define a linear map $C_b:W\to W$ on the chosen basis by
\begin{equation}
  C_b w_j = \boldsymbol F^{\boldsymbol\alpha_j} b.
\end{equation}
The right-hand side lies in $W$ because $b\in W$ and $W$ is invariant under every $F_i$.  Thus the matrix of $C_b$ in the basis $w_1,\ldots,w_d$ is computable by classical linear algebra.  Likewise, let $B_i=F_i|_W$ and compute its matrix in this basis.  Each $B_i$ is invertible, and the $B_i$ commute pairwise.  Write $\boldsymbol B=(B_1,\ldots,B_k)$ and $\boldsymbol B^{\boldsymbol e}:=B_k^{e_k}\cdots B_1^{e_1}$.

We claim that
\begin{equation}
  b=\boldsymbol F^{\boldsymbol e}a \quad\Longleftrightarrow\quad
  C_b=\boldsymbol B^{\boldsymbol e}.
\end{equation}
Indeed, if $b=\boldsymbol F^{\boldsymbol e}a$, then commutativity gives, for every basis vector $w_j$,
\begin{equation}
  C_bw_j=\boldsymbol F^{\boldsymbol\alpha_j}b
  =\boldsymbol F^{\boldsymbol\alpha_j}\boldsymbol F^{\boldsymbol e}a
  =\boldsymbol F^{\boldsymbol e}w_j.
\end{equation}
Hence $C_b=\boldsymbol B^{\boldsymbol e}$.  Conversely, if $C_b=\boldsymbol B^{\boldsymbol e}$, applying both sides to $w_1=a$ gives $b=C_ba=\boldsymbol B^{\boldsymbol e}a=\boldsymbol F^{\boldsymbol e}a$.  The CTC instance is therefore equivalent to constructive membership of $C_b$ in the finite Abelian matrix group
\begin{equation}
  G_B=\langle B_1,\ldots,B_k\rangle.
\end{equation}
If $C_b$ is singular or fails to commute with some $B_i$, it cannot belong to $G_B$ and we reject immediately.

It remains to solve this constructive membership problem.  Efficient quantum algorithms for constructive membership in finite Abelian groups are known through reductions to the Abelian hidden-subgroup problem \cite{ivanyos2003,childs2014}.  Matrix multiplication, inversion, equality testing, and binary powering in $\GL(d,\F_2)$ are classical polynomial-time operations, and matrices provide unique encodings of group elements.  We implement this reduction by applying the Cheung--Mosca finite-Abelian-group decomposition algorithm, whose quantum subroutines are Shor order finding and Abelian hidden-subgroup relation finding \cite{shor1997,cheung2001}, to
\begin{equation}
  G=\langle B_1,\ldots,B_k,C_b\rangle.
\end{equation}
It returns independent generators $g_1,\ldots,g_s$ and their orders $q_1,\ldots,q_s$, giving
\begin{equation}
  G\cong \Z_{q_1}\times\cdots\times\Z_{q_s}
\end{equation}
and the associated Abelian hidden-subgroup relation-finding routine expresses each input generator in these coordinates.  More precisely, it returns vectors $\beta_i\in\prod_{\ell=1}^s\Z_{q_\ell}$ and $c\in\prod_{\ell=1}^s\Z_{q_\ell}$ such that
\begin{equation}
  B_i=g_1^{\beta_{1i}}\cdots g_s^{\beta_{si}},
  \qquad
  C_b=g_1^{c_1}\cdots g_s^{c_s}.
\end{equation}
Here $g_\ell$ is the $\ell$th independent generator of $G$ and has order $q_\ell$.
This coordinate computation is a generalized discrete-logarithm problem in a finite Abelian group and can be solved by standard quantum algorithms for such groups \cite{ivanyos2003,childs2014}.  Membership of $C_b$ in $G_B$ is now equivalent to the system of linear congruences
\begin{equation}
  \sum_{i=1}^k e_i\beta_i=c.
  \label{eq:abelian-coordinate-congruences}
\end{equation}

To solve Eq.~\eqref{eq:abelian-coordinate-congruences} explicitly, choose integer representatives of the coordinates and form
\begin{equation}
  M=(\beta_1\ \cdots\ \beta_k)\in\Z^{s\times k},
  \qquad
  D=\operatorname{diag}(q_1,\ldots,q_s).
\end{equation}
Then Eq.~\eqref{eq:abelian-coordinate-congruences} holds if and only if there is a vector $\boldsymbol y\in\Z^s$ such that
\begin{equation}
  M\boldsymbol e-D\boldsymbol y=c.
  \label{eq:integer-lifted-system}
\end{equation}
Apply the polynomial-time Kannan--Bachem Smith-normal-form algorithm, including its unimodular multiplier matrices, to the integer matrix $(M\ -D)$ \cite{kannan1979}.  The Smith form decides whether Eq.~\eqref{eq:integer-lifted-system} is solvable and, when it is, returns an integer solution $(\boldsymbol e^{(0)},\boldsymbol y^{(0)})$.  The same multiplier matrices give a basis of the integer kernel of $(M\ -D)$.  Projecting those kernel vectors onto their first $k$ coordinates gives a finite generating set for the relation lattice; polynomial-time Hermite-normal-form reduction of the projected generators yields a lattice basis for the relation lattice
\begin{equation}
  \Lambda=\{\boldsymbol z\in\Z^k:\boldsymbol B^{\boldsymbol z}=I\}.
  \label{eq:relation-lattice}
\end{equation}
Consequently, all integer solutions are exactly the affine lattice $\boldsymbol e^{(0)}+\Lambda$, and the nonnegative solutions of the original formulation are $(\boldsymbol e^{(0)}+\Lambda)\cap\Z_{\geq 0}^k$.  Reducing each coordinate of $\boldsymbol e^{(0)}$ modulo $\operatorname{ord}(B_i)$ gives an equivalent nonnegative repetition vector.  The particular solution and relation-lattice basis contain more information than the single witness required by $\mathrm{CTC}_{\mathrm{search}}$.

The only quantum steps are therefore order finding, Abelian-group decomposition, and coordinate relation finding; after these, Eqs.~\eqref{eq:integer-lifted-system} and \eqref{eq:relation-lattice} are solved by classical Smith-normal-form computation.  Since $d\le 2n$, the group orders have polynomial-length binary encodings, and all matrix-group operations and controlled binary powers require time polynomial in $n$, $k$, and the input length, the entire procedure runs in quantum polynomial time.  This proves the FBQP upper bound for commuting search CTC; ignoring the returned witness gives the BQP upper bound for decision CTC.  The NP inclusion follows from the polynomial-size periodic representatives and efficient verification discussed after Definition~2.
\end{proof}

The binary-exponent NP-completeness result does not contradict this theorem.  Although $\{0,1\}^k\subset\Z^k$, restricting the exponents to binary values does not merely reduce the search space; it changes the decision problem.  For unrestricted integer exponents, the solution set is either empty or an affine lattice $\boldsymbol e^{(0)}+\Lambda$, and the task is finite-Abelian-group membership.  The binary-exponent problem instead asks whether this affine lattice intersects the Boolean cube $\{0,1\}^k$.  In general, this Boolean constraint removes the group closure exploited by the quantum algorithm and permits the NP-hardness reduction; commuting self-inverse Clifford actions form an exception, analyzed in Sec.~\ref{sec:involutions}.  Once arbitrary integer exponents are allowed, that reduction no longer applies, whereas the restored Abelian group structure gives the BQP upper bound proved above.  Concretely, in the clause gadget of Sec.~III unrestricted exponents enforce only $s_c\equiv1\pmod 4$, because $G_c^4=I$, rather than the Boolean exactly-one condition $s_c=1$.  Commutativity is essential to the upper bound because $\boldsymbol e\mapsto\boldsymbol F^{\boldsymbol e}$ is not a homomorphism for noncommuting generators.  Indeed, the preceding section proves NP-completeness at $k=3$.  The case $k=2$ is not settled by that reduction: two-generator knapsack for permutation groups is classically polynomial-time solvable \cite{lohrey2025}, while no general BQP algorithm is known for the broader two-operation Clifford--Pauli orbit problem.  We therefore leave noncommuting CTC at $k=2$ open.

\subsection{Self-inverse Clifford actions: reachability versus repetition cost}
\label{sec:involutions}

An instructive boundary case occurs when every phase-free Clifford action is self-inverse, $F_i^2=I$, in addition to pairwise commutativity.  This includes exactly commuting Clifford operations satisfying $U_i^2=I$.  Here reducing each exponent modulo two preserves the action, so binary and unrestricted integer exponents give the same reachability problem.  Nevertheless, imposing an upper bound on the total number of template-operation applications changes its complexity.

\begin{theorem}[Classical reachability for commuting self-inverse Clifford actions]
\label{thm:involution-reachability}
For pairwise commuting $F_1,\ldots,F_k\in\Sp(2n,\F_2)$ satisfying $F_i^2=I$, both decision CTC and recovery of one binary exponent solution are solvable in deterministic classical polynomial time, with $k$ part of the input.  More strongly, the algorithm returns the full binary solution set as an affine subspace of $\F_2^k$.
\end{theorem}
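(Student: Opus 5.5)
We follow the proof of Theorem 3 up to the point where the quantum subroutine is invoked, and replace that subroutine by a classical sieve. The sieve uses the fact that the group is an elementary Abelian $2$-group of unipotent matrices. First we compute the orbit span $W$, the basis $w_j=\boldsymbol F^{\boldsymbol\alpha_j}a$, the restrictions $B_i=F_i|_W$ and the operator $C_b$, exactly as before. By the equivalence proved there, the instance is solvable iff $C_b=\boldsymbol B^{\boldsymbol e}$ for some $\boldsymbol e$. Since $B_i^2=I$ and the $B_i$ commute, $\boldsymbol e\mapsto \boldsymbol B^{\boldsymbol e}$ is a homomorphism $\F_2^k\to \GL(d,\F_2)$. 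The binary solution set is therefore either empty or a coset $\boldsymbol e^{(0)}+K$, where $K=\{\boldsymbol z\in\F_2^k:\boldsymbol B^{\boldsymbol z}=I\}$. The whole task is to compute $K$ and one $\boldsymbol e^{(0)}$ deterministically.

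The key linearization is as follows. Over $\F_2$ we have $(B_i-I)^2=B_i^2-I=0$, so $N_i:=B_i-I$ are commuting nilpotents. Let $J$ be the ideal generated by $N_1,\ldots,N_k$ inside the commutative algebra they generate. Then $J$ is nilpotent, with $J^t=0$ for some $t\le d+1$, and each power $J^s$ is a subspace of $d\times d$ matrices that can be computed by linear algebra. Every group element lies in $I+J$. The crucial estimate is that for $x,y\in J^s$,
\begin{equation}
(I+x)(I+y)=I+x+y+xy,\qquad xy\in J^{2s}\subseteq J^{s+1}.
\end{equation}
Hence $\phi_s:(I+J^s)\cap G\to J^s/J^{s+1}$, $g\mapsto g-I \bmod J^{s+1}$, is a group homomorphism into an $\F_2$-vector space, and its kernel is $(I+J^{s+1})\cap G$. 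If $C_b-I\notin J$, we reject immediately.

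The sieve then runs layer by layer, always tracking exponent vectors in $\F_2^k$. We maintain a list of group elements $h_1,\ldots,h_r\in I+J^s$, each with a recorded exponent vector, that generate the current subgroup $(I+J^s)\cap\langle \boldsymbol B\rangle$. We also maintain the current residual target $c=C_b\cdot\boldsymbol B^{\boldsymbol e_{\rm acc}}$, together with its accumulated exponent vector $\boldsymbol e_{\rm acc}$. At layer $s$ we compute the vectors $\phi_s(h_j)$ and $\phi_s(c)$ and solve a linear system over $\F_2$.
\begin{itemize}
\item If $\phi_s(c)$ is not in the span of the $\phi_s(h_j)$, then $C_b\notin G_B$ and we reject.
\item Otherwise we multiply $c$ by the corresponding product of the $h_j$, which pushes $c$ into $I+J^{s+1}$, and we add the corresponding exponents to $\boldsymbol e_{\rm acc}$.
\item For the next layer, we replace the $h_j$ by the products indexed by a basis of the kernel of the linear map $\F_2^r\to J^s/J^{s+1}$. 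These products generate the kernel of $\phi_s$, because $\phi_s$ is a homomorphism on an elementary Abelian group.
\end{itemize}
The number of generators never exceeds $k$. After at most $d\le 2n$ layers, $J^s=0$. At that point $c=I$ yields the particular solution $\boldsymbol e^{(0)}=\boldsymbol e_{\rm acc}$. The exponent vectors of the surviving trivial elements, taken together with the kernel vectors recorded at the first layer, generate $K$.

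We expect the main obstacle to be the bookkeeping that shows $K$ is obtained exactly, and not just a subspace of it. Our plan is to argue by induction that at each layer the recorded exponent vectors, modulo the relations already found, parametrize $(I+J^s)\cap G_B$ faithfully. The relations in $K$ are then precisely the exponent vectors of elements that reach $I$, and a final Gaussian elimination yields a basis of $K$. The computations consist of matrix products and $\F_2$ linear algebra of dimension at most $d^2$, repeated polynomially often, so the algorithm runs in deterministic polynomial time and outputs the affine subspace $\boldsymbol e^{(0)}+K$.
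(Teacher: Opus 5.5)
Your proposal is correct, and it takes a genuinely different, though closely related, route from the paper's proof.

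\textbf{How the two routes differ.} The paper never forms $W$, $C_b$, or any matrix algebra. It filters the Pauli space itself by $V_{t+1}=\sum_iN_iV_t$. It then refines the affine sets $S_t=\{\boldsymbol e:\boldsymbol F^{\boldsymbol e}a-b\in V_t\}=\boldsymbol e^{(t)}+K_t$, where $K_t$ is the stabilizer of the coset $a+V_t$. Each refinement is a linear system with at most $k$ unknowns and $2n$ equations, and $V_{k+1}=0$ bounds the number of rounds.

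You instead start from the commutativity-based reduction of Theorem~3, which turns orbit reachability into constructive membership of $C_b$ in $G_B$. You then sieve the unipotent group $I+J$ along the filtration $I+J^s$, doing linear algebra in the spaces $J^s/J^{s+1}$, which have dimension up to $d^2$.

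The core idea is the same in both: a nilpotent filtration on whose layers $\boldsymbol e\mapsto\boldsymbol F^{\boldsymbol e}$ becomes additive. The paper's version is leaner because it avoids the orbit-span and $C_b$ conversion and works in dimension $2n$. Yours replaces the quantum subroutine of Theorem~3 by a classical membership algorithm for an elementary Abelian $2$-group of unipotent matrices. That makes it transparent why no discrete-logarithm component survives in the self-inverse case.

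\textbf{The step you flag as the main obstacle.} This step is routine, but your last sentence about $K$ needs correcting. The right invariant is that the exponent vectors $\boldsymbol\eta_1,\ldots,\boldsymbol\eta_r$ recorded at layer $s$ form a basis of
\[
M_s=\{\boldsymbol z\in\F_2^k:\boldsymbol B^{\boldsymbol z}\in I+J^s\}.
\]
At $s=1$ this holds with the standard basis. Suppose it holds at layer $s$. Then $\boldsymbol\epsilon\mapsto\sum_j\epsilon_j\boldsymbol\eta_j$ is an isomorphism $\F_2^r\to M_s$. Since $\boldsymbol B^{\sum_j\epsilon_j\boldsymbol\eta_j}=\prod_jh_j^{\epsilon_j}$, a vector $\boldsymbol\epsilon$ lies in the kernel of your layer-$s$ linear map exactly when its image lies in $M_{s+1}$. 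So the new recorded vectors form a basis of $M_{s+1}$.

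Relations among the $h_j$ are never lost, because any $\boldsymbol\epsilon$ with $\prod_jh_j^{\epsilon_j}=I$ automatically lies in that kernel. Hence, once $J^s=0$, the surviving exponent vectors are already a basis of $K=M_s$, and no extra elimination is needed.

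You should not add the ``kernel vectors recorded at the first layer.'' If these are basis vectors of the layer-$1$ kernel, they only satisfy $\boldsymbol B^{\boldsymbol z}\in I+J^2$. In general they do not lie in $K$, so the returned solution set would be too large.
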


\begin{proof}
We construct the solution set by successively reducing the allowed error $\boldsymbol F^{\boldsymbol e}a-b$.  Although this error is generally nonlinear in the exponents, each refinement will require only a linear system over $\F_2$.

Write $V=\F_2^{2n}$ and introduce $N_i=F_i-I$ to express each binary choice as $F_i^{e_i}=I+e_iN_i$.  Since the field has characteristic two, $N_i^2=F_i^2+I=0$, and the $N_i$ commute.  Products of the factors $I+e_iN_i$ nevertheless contain higher-degree terms; for example,
\begin{align*}
  F_1^{e_1}F_2^{e_2}a
  &=a+e_1N_1a+e_2N_2a\\
  &\quad+e_1e_2N_1N_2a.
\end{align*}
To organize these terms, define the descending sequence of subspaces
\begin{equation}
  V_0=V,\qquad V_{t+1}=\sum_{i=1}^k N_iV_t.
  \label{eq:involution-filtration}
\end{equation}
The space $V_t$ is spanned by images of products of $t$ of the $N_i$; in particular, a term containing $t$ such factors lies in $V_t$.  Every product of $k+1$ factors vanishes, because some factor repeats and the factors commute.  Thus $V_{k+1}=0$.  These spaces are invariant under all generators, and their bases can be computed recursively by Gaussian elimination without enumerating products of the $N_i$.

The key property is that, for $x\in V_t$,
\[
  F_ix-x=N_ix\in V_{t+1}.
\]
Hence every $F_i$, and every $\boldsymbol F^{\boldsymbol e}$, acts as the identity on $V_t/V_{t+1}$: applying it to a vector in $V_t$ changes that vector only by an element of $V_{t+1}$.  This property will make the correction at each stage linear.

We use $V_t$ as the space of allowed errors and compute
\begin{equation}
  S_t=\{\boldsymbol e\in\F_2^k:
    \boldsymbol F^{\boldsymbol e}a-b\in V_t\}.
\end{equation}
The unknown exponents always belong to $\F_2^k$; $V_t$ specifies how closely their action must match $b$.  Initially $S_0=\F_2^k$, while $S_{k+1}$ is the exact solution set because $V_{k+1}=0$.

To describe each nonempty $S_t$, define
\begin{equation}
  K_t=\{\boldsymbol v\in\F_2^k:
    (\boldsymbol F^{\boldsymbol v}-I)a\in V_t\}.
\end{equation}
Commutativity and $F_i^2=I$ give $\boldsymbol F^{\boldsymbol v+\boldsymbol w}=\boldsymbol F^{\boldsymbol v}\boldsymbol F^{\boldsymbol w}$ for binary exponent vectors.  Thus $K_t$ is the stabilizer of $a+V_t$ under the additive group $\F_2^k$, and is a linear subspace.  If $\boldsymbol e^{(t)}\in S_t$, invariance of $V_t$ gives $S_t=\boldsymbol e^{(t)}+K_t$.  The algorithm maintains this representation, starting with $\boldsymbol e^{(0)}=0$ and $K_0=\F_2^k$.

Suppose this representation of $S_t$ is known.  Let $\boldsymbol v_1,\ldots,\boldsymbol v_q$ be a basis of $K_t$.  Every candidate that preserves the current error condition has the form $\boldsymbol e^{(t)}+\sum_jc_j\boldsymbol v_j$, with $c_j\in\F_2$.  Define its elementary correction vectors by
\begin{equation}
  \delta_j=(\boldsymbol F^{\boldsymbol v_j}-I)a\in V_t.
\end{equation}
The map $\boldsymbol v\mapsto(\boldsymbol F^{\boldsymbol v}-I)a$ becomes linear when restricted to $K_t$ and reduced modulo $V_{t+1}$.  Indeed, for $\boldsymbol v,\boldsymbol w\in K_t$,
\begin{align}
  (\boldsymbol F^{\boldsymbol v+\boldsymbol w}-I)a
  &=(\boldsymbol F^{\boldsymbol v}-I)a
    +\boldsymbol F^{\boldsymbol v}(\boldsymbol F^{\boldsymbol w}-I)a
    \nonumber\\
  &\equiv(\boldsymbol F^{\boldsymbol v}-I)a
    +(\boldsymbol F^{\boldsymbol w}-I)a
    \pmod{V_{t+1}}.
\end{align}
The congruence uses $(\boldsymbol F^{\boldsymbol w}-I)a\in V_t$ and the identity action of $\boldsymbol F^{\boldsymbol v}$ on $V_t/V_{t+1}$.

Let $r_t=b-\boldsymbol F^{\boldsymbol e^{(t)}}a\in V_t$ be the error to correct.  For $\boldsymbol v=\sum_jc_j\boldsymbol v_j$, the new residual is
\begin{align*}
  \boldsymbol F^{\boldsymbol e^{(t)}+\boldsymbol v}a-b
  &=-r_t+\boldsymbol F^{\boldsymbol e^{(t)}}(\boldsymbol F^{\boldsymbol v}-I)a\\
  &\equiv-r_t+\sum_jc_j\delta_j\pmod{V_{t+1}}.
\end{align*}
Here $\boldsymbol F^{\boldsymbol e^{(t)}}$ also acts as the identity on $V_t/V_{t+1}$.  Consequently, the corrected exponent vector belongs to $S_{t+1}$ exactly when
\begin{equation}
  \sum_{j=1}^q c_j\delta_j
  \equiv b-\boldsymbol F^{\boldsymbol e^{(t)}}a
  \pmod{V_{t+1}}.
  \label{eq:involution-linear-step}
\end{equation}
Choosing coordinates on $V_t/V_{t+1}$ turns Eq.~\eqref{eq:involution-linear-step} into a linear system over $\F_2$ for the correction coefficients $c_j$.  If it is inconsistent, then $S_{t+1}$ is empty and no exact solution exists.  Otherwise, a particular solution $\boldsymbol c^{(0)}$ gives
\[
  \boldsymbol e^{(t+1)}
  =\boldsymbol e^{(t)}+\sum_jc_j^{(0)}\boldsymbol v_j.
\]
Map a basis of the homogeneous solution space through $\boldsymbol c\mapsto\sum_jc_j\boldsymbol v_j$ to obtain a basis of $K_{t+1}$.  This yields the complete affine set $S_{t+1}=\boldsymbol e^{(t+1)}+K_{t+1}$, retaining all solutions at the finer error tolerance.

After at most $k+1$ refinements, the allowed error space is zero and the resulting affine set is exactly the set of binary CTC solutions.  Each linear system has at most $k$ unknowns and $2n$ equations, and all required matrix products and subspace computations have polynomial size.  The algorithm is therefore deterministic and classically polynomial-time.  It returns one representative and a basis of the homogeneous solution space, rather than enumerating the possibly exponentially many solutions.  For $a=0$, a solution exists exactly when $b=0$, which can also be checked directly.
\end{proof}

To specify the cost constraint, let $m\in\Z_{\geq0}$ be an additional input, encoded in binary.  We ask whether there is a solution of Eq.~\eqref{eq:ctc} with
\begin{equation}
  \boldsymbol e\in\Z_{\geq0}^k,
  \qquad \sum_{i=1}^k e_i\leq m.
  \label{eq:repetition-budget}
\end{equation}
One application of a supplied operation $U_i$ counts as one unit of cost; this counts template-operation uses, rather than the elementary gates inside each $U_i$.

\begin{theorem}[NP-completeness with a repetition budget]
\label{thm:involution-budget}
For templates with commuting self-inverse Clifford actions, deciding whether Eq.~\eqref{eq:repetition-budget} can be satisfied is NP-complete when $k$ is part of the input.  This holds for either binary or nonnegative integer exponents, even for exactly commuting CNOT-only operations satisfying $U_i^2=I$ and X-type input and target Paulis.  Consequently, minimizing the total number of template-operation applications is NP-hard.
\end{theorem}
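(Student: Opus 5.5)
Membership in NP follows from the discussion after Definition~2. Since $F_i^2=I$ and the actions commute, any solution $\boldsymbol e\in\Z_{\ge0}^k$ can be replaced by $\boldsymbol e\bmod 2$. This is again a solution, and its cost $\sum_i e_i$ does not increase. So the integer and binary versions have the same yes-instances, and a witness is a binary vector of weight at most $m$, which can be checked by tableau propagation. For hardness, the plan is to reduce from the weight-bounded decoding problem for binary linear codes. An instance is a parity-check matrix $H\in\F_2^{r\times k}$, a syndrome $s\in\F_2^r$ and a bound $m$, and the question is whether some $\boldsymbol e\in\F_2^k$ has $H\boldsymbol e=s$ and $|\boldsymbol e|\le m$. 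Berlekamp, McEliece and van Tilborg showed this problem is NP-complete. Equivalently, the reduction can start from Exact Cover / minimum-weight solutions of $\F_2$ linear systems.

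The gadget encodes $H$ by CNOTs sharing a common control qubit. Use $r+1$ qubits: a control qubit $0$ and target qubits $1,\ldots,r$. For each column $h_i$ of $H$, set
\[
U_i=\prod_{j:\,H_{ji}=1}\mathrm{CNOT}_{j\rightarrow 0}.
\]
On X supports this acts as $x\mapsto x+(h_i\cdot x_{1..r})e_0$. That direction records only a single parity, so it is the wrong orientation. Instead take
\[
U_i=\prod_{j:\,H_{ji}=1}\mathrm{CNOT}_{0\rightarrow j},
\]
with qubit $0$ as the common control. Its X-support action is $A_i=I+h_ie_0^T$. Because $e_0^Th_i=0$, we get $A_iA_{i'}=I+h_ie_0^T+h_{i'}e_0^T$, which is symmetric in $i,i'$, and $A_i^2=I$. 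The unitaries themselves commute exactly, since CNOTs sharing a control commute, and each satisfies $U_i^2=I$. Choose $P_a=X_0$ and $P_b=X_0\prod_{j:\,s_j=1}X_j$. Then
\[
U_{\mathcal T}(\boldsymbol e)P_aU_{\mathcal T}(\boldsymbol e)^\dagger
=X_0\prod_j X_j^{(H\boldsymbol e)_j}
\]
up to sign. This equals $P_b$ if and only if $H\boldsymbol e=s$. The budget $\sum_i e_i\le m$ is then exactly the weight bound, and the construction takes polynomial time.

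The remaining work is correctness in both exponent regimes. In the binary regime the equivalence above is immediate. In the integer regime, the yes-instances agree because reducing each exponent modulo $2$ preserves the action and does not increase cost. The minimization corollary follows because a polynomial-time minimizer would decide the budget version. The main obstacle is choosing the source problem carefully: Theorem~\ref{thm:involution-reachability} shows that reachability by itself is easy, so all of the hardness has to come from the weight constraint. The decoding problem isolates exactly that constraint. If a purely combinatorial source is preferred, a fallback is to reduce from Exact Cover by 3-sets with budget equal to the universe size divided by three, taking $H$ to be the incidence matrix and $s=\mathbf 1$. In that case I would have to check that any solution of $H\boldsymbol e=\mathbf 1$ of weight at most $|U|/3$ is an exact cover. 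This holds because each set covers three elements, so meeting all $|U|$ elements with at most $|U|/3$ sets forces the chosen sets to be disjoint.
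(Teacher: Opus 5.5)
Your proposal is correct and follows the paper's proof essentially verbatim. It uses the same reduction from Berlekamp--McEliece--van Tilborg syndrome decoding, with the same common-control gadget $U_i=\prod_{j:H_{ji}=1}\mathrm{CNOT}_{0\rightarrow j}$ and $P_a=X_0$, $P_b=X_0\prod_j X_j^{s_j}$. It handles integer exponents the same way, by reduction modulo $2$, and proves NP membership with the same binary witness; your explicit check $A_iA_{i'}=I+h_ie_0^T+h_{i'}e_0^T$ and the Exact-Cover-by-3-Sets fallback are valid but not needed.
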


\begin{proof}
We reduce from syndrome decoding: given $H\in\F_2^{r\times k}$, $s\in\F_2^r$, and $m$, decide whether there exists a binary vector $\boldsymbol e$ with
\begin{equation}
  H\boldsymbol e=s,\qquad
  \operatorname{wt}(\boldsymbol e)\leq m.
  \label{eq:syndrome-decoding}
\end{equation}
This decision problem is NP-complete \cite{berlekamp1978}.  On $r+1$ qubits labeled $0,1,\ldots,r$, define
\begin{equation}
  \begin{aligned}
    U_i&=\prod_{\{j:H_{ji}=1\}}\mathrm{CNOT}_{0\rightarrow j},\\
    P_a&=X_0,\qquad P_b=X_0\prod_{j=1}^r X_j^{s_j}.
  \end{aligned}
  \label{eq:syndrome-template}
\end{equation}
CNOT gates with a common control commute exactly; consequently the $U_i$ commute exactly and satisfy $U_i^2=I$.  Conjugating $X_0$ by $U_i$ appends precisely the X support specified by the $i$th column of $H$.  Therefore, for arbitrary nonnegative integer exponents,
\begin{equation}
  U_{\mathcal T}(\boldsymbol e)P_a
  U_{\mathcal T}(\boldsymbol e)^\dagger=P_b
  \quad\Longleftrightarrow\quad
  H(\boldsymbol e\bmod2)=s.
\end{equation}
For binary exponents, the repetition budget is exactly the weight bound in Eq.~\eqref{eq:syndrome-decoding}.  For nonnegative integer exponents, replacing each $e_i$ by $e_i\bmod2$ preserves the Pauli action and never increases the cost.  Hence the same equivalence holds with the budget in either exponent domain.  The construction uses at most $rk$ CNOT gates and has polynomial size, proving NP-hardness.  Membership in NP follows because every yes-instance has a binary witness of length $k$, whose action and cost can be checked in polynomial time.
\end{proof}

In Eqs.~\eqref{eq:syndrome-template}, feasibility and recovery of one binary solution require only solving $H\boldsymbol e=s$ by Gaussian elimination.  The full integer solution set, when nonempty, is
\begin{equation}
  \boldsymbol e^{(0)}+\Lambda_H,\qquad
  \Lambda_H=\{\boldsymbol z\in\Z^k:H\boldsymbol z\equiv0\pmod2\}.
\end{equation}
A basis of $\Lambda_H$ is also computable in classical polynomial time by integer normal forms \cite{kannan1979}.  Nevertheless, selecting a solution with at most $m$ operation uses is NP-complete.  Thus this result separates finding a feasible instantiation from finding a low-cost one, even when neither feasibility nor the relation lattice requires a quantum computation.  It is distinct from the binary-feasibility hardness in Sec.~III: for self-inverse actions, the binary restriction alone preserves the full generated group, and the additional weight bound causes the hardness.

\section{One-operation templates and discrete logarithm}

\subsection{Qubit case: discrete logarithm over \texorpdfstring{$\F_{2^r}$}{F2r}}

The $k=1$ specialization of the same unrestricted search relation, $F^e a=b$, already captures a standard Shor-type problem.  It also makes the decision--search distinction concrete: subgroup membership and recovery of an exponent are different computational tasks.  First consider the qubit case.  Let $\F_{2^r}$ be represented as an $r$-dimensional vector space over $\F_2$.  For $g\in \F_{2^r}^{\times}$, multiplication by $g$ is an invertible binary linear map
\begin{equation}
  A_g\in \GL(r,\F_2).
\end{equation}
Embed this into a Clifford symplectic matrix by
\begin{equation}
  F_g=
  \begin{pmatrix}
  A_g&0\\
  0&A_g^{-T}
  \end{pmatrix}
  \in \Sp(2r,\F_2).
\end{equation}
Quantum circuits for DLP over $\F_{2^r}$, including explicit controlled-multiplication constructions, were studied by Beauregard, Brassard, and Fernandez \cite{beauregard2003}.  Here the corresponding uncontrolled fixed-multiplier map has a particularly simple Clifford description.  The matrix $A_g$ and its Clifford circuit can be constructed explicitly.  Fix an $\F_2$ basis $\beta=(\beta_1,\ldots,\beta_r)$ of $\F_{2^r}$.  The $j$th column of $A_g$ is the coordinate vector $[g\beta_j]_\beta$, so all entries of $A_g$ are obtained by arithmetic in the chosen representation of the field.  Multiplication by a fixed nonzero field element is $\F_2$-linear, so its binary matrix can be synthesized as a CNOT-and-SWAP circuit by standard linear reversible-circuit synthesis \cite{patel2008}.  Concretely, Gaussian elimination decomposes $A_g$ into elementary binary row operations.  The row addition $x_i\leftarrow x_i+x_j$ is implemented by $\operatorname{CNOT}_{j\to i}$, while exchanging rows $i$ and $j$ is implemented by $\operatorname{SWAP}_{i,j}$.  Reversing the elimination sequence therefore gives a CNOT-and-SWAP circuit for $A_g$; each SWAP may in turn be replaced by three CNOTs.  If this reversible circuit implements $\lvert x\rangle\mapsto\lvert A_gx\rangle$, its Pauli action is automatically $\operatorname{diag}(A_g,A_g^{-T})$, so the Z-type block need not be synthesized separately.  The same procedure applies to every power $A_g^{2^j}$ used in phase estimation: the matrix power is first computed classically by repeated squaring and is then synthesized directly, without repeating the circuit $2^j$ times.

For X-type Paulis, only the $A_g$ block acts.  Taking $a$ to represent $1\in\F_{2^r}$ and $b$ to represent $h\in\F_{2^r}^{\times}$ gives
\begin{equation}
  F_g^e a=b
  \quad\Longleftrightarrow\quad
  g^e=h.
\end{equation}
Thus, for this binary-linear family, the resulting one-operation search-CTC instance is exactly a finite-field discrete logarithm instance.

\begin{corollary}
Search CTC for a one-operation Clifford template over qubits is discrete-logarithm hard under polynomial-time reductions.  The corresponding decision problem asks only whether $h$ belongs to the cyclic subgroup generated by $g$, whereas search recovers an exponent.  Thus the DLP hardness applies to $\mathrm{CTC}_{\mathrm{search}}$, not to $\mathrm{CTC}_{\mathrm{dec}}$.
\end{corollary}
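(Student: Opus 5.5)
The proof is a polynomial-time reduction from DLP over $\F_{2^r}^{\times}$ to one-operation search CTC, using the construction just given. Take a DLP instance $(g,h)$ with $g,h\in\F_{2^r}^{\times}$, given in some fixed polynomial-size representation of the field (for instance, an irreducible polynomial of degree $r$). The task is to find $e$ with $g^e=h$, promised that such an $e$ exists.

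\textbf{Building the instance.} Fix the polynomial basis $\beta$ and compute the columns $[g\beta_j]_\beta$. This needs $r$ field multiplications and yields $A_g\in\GL(r,\F_2)$. We then form $F_g=\mathrm{diag}(A_g,A_g^{-T})\in\Sp(2r,\F_2)$; if an explicit circuit is wanted, we synthesize it as a CNOT/SWAP circuit by Gaussian elimination, as described above. We set $a=([1]_\beta;\mathbf 0)$ and $b=([h]_\beta;\mathbf 0)$. All of this takes polynomial time in $r$.

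\textbf{Correctness.} By induction on $e$, $A_g^e$ is the matrix of multiplication by $g^e$. Both $a$ and $b$ have zero Z-part, so $F_g^e a=([g^e]_\beta;\mathbf 0)$. Hence $F_g^e a=b$ if and only if $g^e=h$. In particular, yes-instances of DLP map to yes-instances of search CTC, and the witness sets coincide exactly. Every valid CTC output $e$, which lies in the bounded range $0\le e<|\Sp(2r,\F_2)|$, is therefore itself a discrete logarithm. If a canonical representative is preferred, we reduce it modulo $2^r-1$. So one oracle call to search CTC, followed by this trivial postprocessing, solves DLP.

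\textbf{Decision versus search.} It remains to justify why the hardness statement concerns search rather than decision. The decision version asks whether $h\in\langle g\rangle$. Over a finite field this is classically easy given $|\langle g\rangle|$, and trivially true whenever $g$ generates $\F_{2^r}^{\times}$. The decision instance therefore carries no DLP content, which is consistent with the remark in Sec.~V. The only subtle point in the whole argument is this bookkeeping, together with the observation that the bounded output range of search CTC does not lose solutions. The latter holds because $g$ has order dividing $2^r-1<|\Sp(2r,\F_2)|$, so every discrete logarithm has a representative in the allowed range. No real obstacle is expected.
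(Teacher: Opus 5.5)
Your proposal is correct and follows the paper's own argument. It uses the same embedding $F_g=\mathrm{diag}(A_g,A_g^{-T})$, with $a$ representing $1$ and $b$ representing $h$, so that $F_g^e a=b\iff g^e=h$ and the witness sets coincide, and it makes the same observation that decision only asks whether $h\in\langle g\rangle$. Your extra checks (zero Z-part, the bounded output range $0\le e<|\Sp(2r,\F_2)|$, reduction modulo $2^r-1$) only make explicit what the paper leaves implicit.

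One small caution: deciding $h\in\langle g\rangle$ is classically easy only once $\operatorname{ord}(g)$ is known. That order is not known to be classically computable without factoring $2^r-1$. So your ``no DLP content'' remark is an interpretive statement, as it is in the paper, not a proved non-hardness result.
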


Together with Theorem~1, this places a DLP-containing integer-search slice and an NP-complete binary-decision slice in adjacent parameter regimes of the same Clifford--Pauli reachability relation.

As a concrete toy example, take
\begin{equation}
  \F_8=\F_2[\alpha]/(\alpha^3+\alpha+1).
\end{equation}
Consider the discrete-logarithm problem
\begin{equation}
  \alpha^e=1+\alpha+\alpha^2.
  \label{eq:f8-dlp}
\end{equation}
In the basis $(1,\alpha,\alpha^2)$, multiplication by $\alpha$ is
\begin{equation}
  A=
  \begin{pmatrix}
  0&0&1\\
  1&0&1\\
  0&1&0
  \end{pmatrix},
  \qquad A^7=I.
\end{equation}
For this example, a direct synthesis of $A$ is, in chronological order, $\operatorname{SWAP}_{1,2}$, $\operatorname{SWAP}_{1,3}$, and $\operatorname{CNOT}_{1\to 2}$.  Indeed, this sequence maps $(x_1,x_2,x_3)$ to $(x_3,x_1+x_3,x_2)=A(x_1,x_2,x_3)$.  It maps X supports as
\begin{align}
  X_1&\mapsto X_2\mapsto X_3\mapsto X_1X_2
  \mapsto X_2X_3 \nonumber\\
  &\mapsto X_1X_2X_3
  \mapsto X_1X_3\mapsto X_1.
\end{align}
Explicitly, define the three-qubit Clifford operation
\begin{equation}
  U=\operatorname{CNOT}_{1\to 2}
    \operatorname{SWAP}_{1,3}\operatorname{SWAP}_{1,2}.
  \label{eq:f8-template-operation}
\end{equation}
Then Eq.~\eqref{eq:f8-dlp} is the search-CTC instance with the one-operation template $\mathcal T=(U)$, input Pauli $P=X_1$, and target Pauli $P'=X_1X_2X_3$: find $e\in\Z_{\geq0}$ satisfying
\begin{equation}
  U^eX_1(U^\dagger)^e=X_1X_2X_3.
  \label{eq:f8-ctc}
\end{equation}
This condition is equivalent to $\alpha^e=1+\alpha+\alpha^2$, and holds exactly when $e\equiv5\pmod7$.  Thus the canonical solution is $e=5$, corresponding to five applications of the supplied operation $U$.

\subsection{Extension to odd prime dimensions}

Although the paper's main setting is qubits, the one-operation-template DLP embedding is not essentially binary.  For a prime-dimensional qudit Clifford system, Pauli labels live in $\F_p^{2n}$ and Clifford conjugation acts through $\Sp(2n,\F_p)$.  Let $p$ be prime and let $\F_{p^r}$ be represented as an $r$-dimensional vector space over $\F_p$.  Multiplication by $g\in\F_{p^r}^{\times}$ is an $\F_p$-linear map
\begin{equation}
  A_g\in\GL(r,\F_p),
\end{equation}
which defines a prime-dimensional qudit Clifford action
\begin{equation}
  F_g=
  \begin{pmatrix}
  A_g&0\\
  0&A_g^{-T}
  \end{pmatrix}
  \in\Sp(2r,\F_p).
\end{equation}
For X-type Paulis,
\begin{equation}
  F_g^e a=b
  \quad\Longleftrightarrow\quad
  g^e=h
\end{equation}
after identifying $a$ with $1$ and $b$ with $h\in\F_{p^r}^{\times}$.  In particular, setting $r=1$ gives the prime-field discrete logarithm problem in $\F_p^\times$ as a one-qudit Clifford-orbit problem.

For fixed small characteristic, including $\F_{2^r}$, the best known classical algorithms are quasi-polynomial in important asymptotic regimes \cite{barbulescu2014}.  Prime-field DLP over $\F_p^\times$ is instead attacked by number-field-sieve-type algorithms with subexponential complexity comparable in shape to integer factoring, and no classical polynomial-time algorithm is known.

\onecolumngrid
\vspace{6pt}
\begin{center}
\begin{minipage}{0.92\textwidth}
\refstepcounter{table}
\label{tab:complexity-summary}
\small
\noindent\textsc{Table \thetable.} Summary of the problem settings and complexity results.  The exponent domain distinguishes the binary-exponent variant from unrestricted CTC.  All rows use the reachability relation in Eq.~\eqref{eq:ctc}; the listed regimes vary the exponent domain, commutativity, template size, generator orders, repetition budget, and decision versus search output.  The budget $m$ counts applications of the supplied template operations.
\par\smallskip
\centering
\begin{tabular}{@{}p{0.18\linewidth}p{0.14\linewidth}c p{0.13\linewidth}p{0.37\linewidth}@{}}
\hline\hline
\raggedright Template class & \raggedright Commutativity & $k$ & \raggedright Exponent domain & \raggedright Complexity or status \tabularnewline
\hline
\raggedright Decision qubit CTC
  & \raggedright Commuting & Variable
  & $\{0,1\}$
  & \raggedright NP-complete, even for CNOT-only operations \tabularnewline
\raggedright Search qubit CTC
  & \raggedright Commuting & Variable
  & $\{0,1\}$
  & \raggedright In FNP; reducible to decision with at most $k$ queries \tabularnewline
\raggedright Self-inverse CTC, $F_i^2=I$
  & \raggedright Commuting & Variable
  & $\{0,1\}$ or $\Z_{\geq0}$
  & \raggedright Classical polynomial-time decision and recovery of one binary solution \tabularnewline
\raggedright Self-inverse CTC, $\sum_i e_i\leq m$
  & \raggedright Commuting & Variable
  & $\{0,1\}$ or $\Z_{\geq0}$
  & \raggedright NP-complete, even for exactly commuting CNOT-only operations with $U_i^2=I$ \tabularnewline
\raggedright Decision qubit CTC
  & \raggedright Commuting & Arbitrary
  & $\Z$
  & \raggedright In $\mathrm{NP}\cap\mathrm{BQP}$ \tabularnewline
\raggedright Search qubit CTC
  & \raggedright Commuting & Arbitrary
  & $\Z$
  & \raggedright Quantum polynomial time; a particular solution and the full relation lattice returned; DLP-hard already at $k=1$ \tabularnewline
\raggedright Decision qubit CTC
  & \raggedright Noncommuting & $3$
  & $\Z$
  & \raggedright NP-complete, even for SWAP-only operations \tabularnewline
\raggedright Decision qubit CTC
  & \raggedright Noncommuting & $2$
  & $\Z$
  & \raggedright Open \tabularnewline
\raggedright Search prime-dimensional qudit CTC
  & \raggedright Automatic & $1$
  & $\Z$
  & \raggedright Contains DLP over $\F_{p^r}^{\times}$, including $\F_p^{\times}$; solvable in quantum polynomial time \tabularnewline
\hline\hline
\end{tabular}
\end{minipage}
\end{center}
\vspace{6pt}
\twocolumngrid

\section{Discussion}

The sharpest transition occurs without changing the phase-free commuting Clifford--Pauli reachability equation.  Allowing arbitrary integer repetitions restores Abelian-group closure: decision CTC lies in $\mathrm{NP}\cap\mathrm{BQP}$, and a quantum algorithm constructively recovers one solution and the full exponent-relation lattice.  Its one-operation search slice already contains binary finite-field DLP.  These results give an affirmative answer to the question posed in the subtitle for commuting templates with unrestricted integer exponents: quantum computers can recover classically verifiable compilation witnesses in polynomial time, including for instances encoding finite-field DLP.  This demonstrates algorithmic usefulness in this setting, although it does not establish an unconditional quantum speedup over classical computation.  In general, imposing $\boldsymbol e\in\{0,1\}^k$ removes that closure and asks whether an affine lattice intersects the Boolean cube, a decision problem that is NP-complete for variable $k$ even for exactly commuting CNOT-only templates.  The DLP and NP-completeness embeddings use different values of $k$ and different output tasks, but they occupy adjacent regimes of the same reachability relation.  Both are realized within CNOT-only phase-free Clifford dynamics.  This provides an unusually unified comparison between Shor-type structured witness recovery and NP-complete Boolean selection, rather than juxtaposing unrelated problem encodings.  As a second axis, dropping commutativity makes integer-exponent decision CTC NP-complete already for three wire-permutation Cliffords.

The self-inverse restriction in Sec.~\ref{sec:involutions} isolates a separate feasibility--cost transition.  For commuting $F_i$ with $F_i^2=I$, allowing arbitrary integer exponents does not enlarge the binary orbit, and one feasible exponent vector can be found classically in polynomial time.  Even in this tractable regime, the requirement $\sum_i e_i\leq m$ makes decision NP-complete.  The syndrome-decoding construction shows that the difficulty can persist after a particular solution and the full relation lattice are already known.  Thus an efficient feasibility algorithm need not yield an implementation with a small number of operation uses.

From a resource perspective, the binary-linear DLP embedding may also be a convenient early fault-tolerant benchmark.  Unlike modular multiplication in Shor factoring, whose carries and modular reduction require substantial non-Clifford resources \cite{gidney2019}, each uncontrolled map $x\mapsto A_gx$ is a CNOT-and-SWAP circuit \cite{patel2008}; non-Clifford gates enter when these maps are coherently controlled.  This observation alone does not establish quantum advantage, especially because quasi-polynomial classical algorithms are known for binary-field DLP in important asymptotic regimes \cite{barbulescu2014}, but it motivates a more detailed future resource comparison.

Several directions remain open.  First, the most immediate complexity question is the noncommuting two-operation case, which lies between the DLP-containing one-operation problem and the NP-complete three-operation problem.  Second, the present paper uses Pauli reachability as a minimal template-compilation subproblem; extending the classification to full logical Clifford transformations would connect more directly to code-block compilation.  Third, the BQP algorithm should be made fully explicit at the circuit level for families of binary linear maps with low CNOT volume and low Toffoli depth.  Finally, one should identify families whose best known classical algorithms remain costly while preserving the low-non-Clifford controlled-linear structure.  Such families could provide a practical and verifiable benchmark for early fault-tolerant quantum processors.

\begin{samepage}
\begin{acknowledgments}
KF is supported by MEXT Quantum Leap Flagship Program (MEXT Q-LEAP) Grant No. \mbox{JPMXS0120319794}, JST COI-NEXT Grant No. \mbox{JPMJPF2014}, JST Moonshot R\&D Grant No. \mbox{JPMJMS256E}, and JST CREST \mbox{JPMJCR24I3}.
\end{acknowledgments}
\end{samepage}

\bibliography{references}

\end{document}